\documentclass{article}

\usepackage{PRIMEarxiv}

\usepackage[utf8]{inputenc} 
\usepackage[T1]{fontenc}    

\usepackage{amsmath}
\usepackage{amssymb}
\usepackage{amsfonts}
\usepackage{amsthm}

\usepackage{graphicx}       
\usepackage{tikz}
\usepackage{pgfplots}
\usepackage{subcaption}

\usepackage{booktabs}       
\usepackage{multirow}
\usepackage{threeparttable}

\usepackage[ruled]{algorithm2e}

\usepackage{microtype}      
\usepackage{enumitem}
\usepackage{fancyhdr}       
\usepackage{caption}
\usepackage{boxedminipage}

\usepackage{authblk}
\usepackage{comment}
\usepackage{lipsum}
\usepackage{nicefrac}
\usepackage{pifont}
\usepackage{thm-restate}
\usepackage{xr}

\usepackage{natbib}

\usepackage{hyperref}
\usepackage{url}
\hypersetup{
    colorlinks=true,
    linkcolor=blue,
    citecolor=blue,
    urlcolor=blue
}
\usepackage{cleveref}

\usepackage{xcolor}

\newtheorem{definition}{Definition}%
\newtheorem{theorem}{Theorem}%
\newtheorem{example}{Example}

\allowdisplaybreaks

\title{\bf MenuNet: A Strategy-Proof Mechanism for Matching Markets}

\author[1,2]{Zhaohong Sun}
\author[1]{Makoto Yokoo}

\affil[1]{Kyushu University, Japan}
\affil[2]{CyberAgent, Japan}

\date{\vspace{-10mm}}

\begin{document}

\maketitle

\begin{abstract}
Strategy-proofness is a fundamental desideratum in mechanism design, ensuring truthful reporting and robust participation. Stability is another central requirement in matching markets, widely adopted in applications such as school choice and labor market clearing. In practice, however, these markets are invariably governed by complex distributional constraints, ranging from diversity quotas and regional balance to global capacity slacks, under which stable matchings often fail to exist. This raises a fundamental question: how to distribute unavoidable instability across agents while preserving strategy-proofness? To address this, we propose \texttt{MenuNet}, a strategy-proof mechanism design framework based on a neural representation of menus. Rather than directly constructing assignments, \texttt{MenuNet} learns to generate personalized probabilistic menus, from which assignments are realized via a structured sequential choice rule that guarantees strategy-proofness by construction. By decomposing stability into fairness (no envy) and non-wastefulness, our approach models these properties as vector-valued quantities and optimizes their distribution through differentiable objectives, providing a principled trade-off between competing axioms. Empirically, \texttt{MenuNet} navigates this trade-off effectively: it consistently outperforms Random Serial Dictatorship (RSD) in terms of envy and Deferred Acceptance (DA) in terms of waste, while maintaining scalability and computational efficiency. These results suggest that learning-based menu mechanisms provide a flexible and scalable paradigm for mechanism design in highly constrained, real-world environments.
\end{abstract}

\section{Introduction}
Strategy-proofness (SP) is a central desideratum in matching markets, ensuring that agents have no incentive to misreport their preferences. By aligning incentives with truthful reporting, SP simplifies strategic behavior and enhances the robustness of the mechanism. Its importance is well illustrated by major market design reforms, such as the National Resident Matching Program (NRMP) in the U.S. \citep{Roth84a} and the Boston school choice system \citep{APRS05a}, where manipulable mechanisms were replaced by strategy-proof alternatives. These reforms are largely based on the Deferred Acceptance (DA) mechanism \citep{GaSh62a}, which is strategy-proof for the proposing side and produces stable outcomes.
Stability is another fundamental requirement, ensuring that no pair of agents can profitably deviate to form a blocking match. This property is widely adopted in practice, including medical residency matching \citep{RoPe99a,Roth08a}, school choice systems in New York City \citep{APR05a} and Chile \citep{CEE+22a}, centralized college admissions in India \citep{BCC+19b}, and daycare matching in Japan \citep{KaKo24a, STM+23a, SYT+24a}.

In many real-world applications, however, matching markets are subject to distributional and institutional constraints that extend beyond the classical framework of rigid capacities. For instance, lower quotas may require institutions to enroll a minimum number of students to remain viable \citep{BFIM10a}, while regional quotas impose aggregate capacity limits across groups of institutions \citep{KaKo15a}. Diversity and affirmative action policies introduce type-based priorities to enroll students from diverse backgrounds \citep{SoYe22a, AyBo21a}, and complementarities arise in settings such as residency matching for couples \citep{McMa10a, KPR13a}. More generally, multi-dimensional feasibility constraints encountered in refugee resettlement further complicate the solution space \citep{DKT23a, ACGS18a}.

In the presence of such constraints, stable matchings are no longer guaranteed to exist. The existing literature has largely addressed this challenge by decomposing stability into fairness (absence of justified envy) and non-wastefulness, subsequently relaxing these properties to restore feasibility under specific constraints. However, these approaches typically focus on aggregate objectives, offering limited control over the distributional consequences of the resulting instability. In practice, this can lead to systemic imbalance, where a small subset of agents bear the brunt of the violations, resulting in extreme individual dissatisfaction.
Since some degree of instability is unavoidable in constrained environments, the key design question is no longer how to eliminate instability, but how to distribute it. Our goal is to design strategy-proof mechanisms that not only balance fairness and non-wastefulness, but also equitably partition unavoidable instability across agents to preclude disproportionate individual loss.

To address this challenge, we introduce \texttt{MenuNet}, a neural mechanism design framework built upon personalized probabilistic menus. Rather than directly optimizing discrete assignments, \texttt{MenuNet} generates an admission probability menu for each agent, from which final allocations are induced via a structured choice rule that ensures strategy-proofness by construction (Theorem~\ref{theo:MenuNet:SP}). 
The framework is modular and versatile, modeling both envy and waste as vector-valued quantities. This formulation allows \texttt{MenuNet} to optimize differentiable objectives with fine-grained control, facilitating a principled trade-off between fairness and efficiency while preventing the concentration of instability from disproportionately affecting a small subset of agents. Moreover, the architecture naturally accommodates a broad spectrum of distributional constraints, and is computationally efficient and scales to large markets.

To demonstrate the effectiveness of our approach, we focus on a representative constraint, global capacity slack, where capacity constraints are treated as flexible and allow for controlled and bounded violations. Such settings commonly arise in practice: universities may over-admit to hedge against yield uncertainty; research labs may temporarily exceed supervision limits at a marginal cost; and course allocation systems often expand high-demand sections to better satisfy preferences. These requirements introduce interdependent constraints across institutions, leading to environments in which capacity is sufficient in aggregate but remains locally scarce.

Through extensive experiments across diverse market scales, we demonstrate that \texttt{MenuNet} achieves a superior trade-off between fairness and non-wastefulness compared to classical baselines such as Random Serial Dictatorship (RSD) and Deferred Acceptance (DA). Specifically, \texttt{MenuNet} consistently attains lower envy than RSD and lower waste than DA, while crucially distributing unavoidable dissatisfaction more equitably across the agent population. Furthermore, the framework exhibits robust scalability, maintaining computational efficiency as it scales to large-scale markets.

\section{Related Work}
Recent advances in neural mechanism design have demonstrated the potential of deep learning for optimal auction design \citep{FNP18a,RJBW21a,DTYF+22a,ISFB22a,DFN+24a}. In contrast, comparatively little work has applied neural approaches to matching problems. A notable exception is \citep{RFL+21a}, which studies neural mechanisms for classical two-sided matching. Our work differs from this paper in several important respects. First, while it focuses on classical matching environments with rigid capacities, our framework is designed for settings with flexible quotas and more general distributional constraints commonly arising in institutional applications. Second, rather than directly learning assignments through regret-based objectives, \texttt{MenuNet} adopts a menu-based representation that guarantees strategy-proofness by construction, providing a structural alternative to empirical regularization. Third, \texttt{MenuNet} is designed for scalability, achieving efficient training and inference even in large markets. 

A separate line of work studies matching under uncertainty through bandit-based frameworks, where agent preferences must be learned online \citep{DaKa05a,LMJ20a,LRMJ21a,Basu25a}. These works address preference elicitation and exploration-exploitation trade-offs, whereas our focus is on mechanism design under known preferences but complex feasibility constraints.

\section{Model}

In this section, we formalize the school choice model with global capacity flexibility, a specific form of distributional constraints. While our proposed \texttt{MenuNet} framework applies to a broad class of matching markets, we focus on this setting for illustrative purposes.

An instance is defined by the tuple $\mathcal{I} = (S, C, \mathbf{q}, k, V, U)$, where $S = \{s_1, \dots, s_n\}$ and $C = \{c_0, c_1, \dots, c_m\}$ denote the sets of students and schools, respectively. Each school $c \in C \setminus \{c_0\}$ is associated with a \emph{soft capacity target} $q_c \in \mathbb{Z}_{\geq 0}$, representing its preferred maximum enrollment. We include an outside option $c_0 \in C$, representing the option of being unassigned, and assume $c_0$ has sufficient capacity to accommodate all students (i.e., $q_{c_0} \ge n$). To capture flexibility in the system, we introduce a \emph{global slack} parameter $k \in \mathbb{Z}_{\geq 0}$, which bounds the total allowable over-enrollment across all schools.

To enable gradient-based optimization, we adopt a cardinal representation of preferences and priorities. Let $V = [v_{s,c}] \in \mathbb{R}^{n \times (m+1)}$ and $U = [u_{s,c}] \in \mathbb{R}^{n \times (m+1)}$ denote the student utility matrix and the school priority matrix, respectively, where $v_{s,c}$ represents the utility of student $s$ for school $c$, and $u_{s,c}$ represents the priority assigned to $s$ by school $c$.
All scores for acceptable assignments are normalized to lie in $[0,1]$, with higher values indicating stronger preferences or higher priorities. We allow $v_{s,c} < 0$ to indicate that school $c$ is unacceptable to student $s$. The outside option is normalized to $v_{s,c_0} = 0$, ensuring that all unacceptable schools are strictly dominated.\footnote{We assume all students are acceptable to all schools from the schools' perspective, as is standard in the school choice literature; this assumption does not affect the design of our algorithm.}


A matching $\mu$ assigns each student to exactly one element of $C$, i.e., $\mu(s) \in C$ for all $s \in S$. For each school $c \in C$, let $\mu(c) = \{s \in S \mid \mu(s) = c\}$ denote the set of students assigned to $c$. A matching is \emph{feasible} if the aggregate capacity violation satisfies:
\begin{equation}
\sum_{c \in C \setminus \{c_0\}} \max\{|\mu(c)| - q_c, 0\} \leq k.
\end{equation}

%
Stability is a central concept in matching markets, defined as the combination of \emph{individual rationality} and the absence of \emph{blocking pairs} \citep{Roth85a}. Individual rationality requires that no student is assigned to an unacceptable school, while a pair $(s,c)$ blocks a matching if both sides prefer to be matched with each other over their current assignments.
In our setting, blocking pairs can be decomposed into two distinct types: either a student can displace a lower-priority student at a school, or can be assigned to a school with available capacity without violating feasibility. These two cases correspond, respectively, to violations of \emph{fairness} and \emph{non-wastefulness}, standard notions in the literature on matching with distributional constraints \citep{KaKo24a}.
\begin{definition}[Fairness]
\label{def:fair}
A matching $\mu$ is \emph{fair} if there does not exist a pair $(s,c)$ such that (i) $s$ strictly prefers $c$ to $\mu(s)$, and (ii) there exists a student $s' \in \mu(c)$ with lower priority than $s$ at $c$.
\end{definition}
\begin{definition}[Non-wastefulness]
\label{def:NW}
A matching $\mu$ is \emph{non-wasteful} if there does not exist a pair $(s,c)$ such that 
(i) $s$ strictly prefers $c$ to $\mu(s)$, and 
(ii) the matching $\mu'$, obtained by reassigning $s$ to $c$ and leaving all other students' assignments unchanged, is feasible.
\end{definition}
The inherent tension between fairness and non\mbox{-}wastefulness under global capacity constraints can be illustrated by the following example.
\begin{example}
\label{example:non-existence}
Consider an instance with two students $S=\{s_1,s_2\}$ and two schools $C=\{c_1,c_2\}$. The system is subject to a global quota $k=1$, so that at most one student can be assigned in any feasible matching. Preferences and priorities are perfectly misaligned: student $s_1$ prefers $c_2$ over $c_1$, and $s_2$ prefers $c_1$ over $c_2$, while school $c_1$ ranks $s_1$ above $s_2$, and $c_2$ ranks $s_2$ above $s_1$. We claim that no deterministic matching satisfies both fairness and non-wastefulness.
If we assign $s_1$ to $c_1$, the matching is wasteful because $s_1$ prefers $c_2$ and reassigning $s_1$ to $c_2$ remains feasible ($k=1$). If we instead assign $s_1$ to $c_2$, fairness is violated because $s_2$ has higher priority at $c_2$ and prefers $c_2$ to being unmatched.
By symmetry, assigning $s_2$ to $c_2$ is wasteful, while assigning $s_2$ to $c_1$ violates fairness. Therefore, every non-empty deterministic matching either violates fairness or is wasteful.
\end{example}
The impossibility result in Example~\ref{example:non-existence} reveals a fundamental limitation of deterministic mechanisms under global constraints. This motivates the introduction of randomization as a relaxation of deterministic assignments. To illustrate, let $\mu^{(1)}$ be the matching with $\mu^{(1)}(s_1)=c_2$, and let $\mu^{(2)}$ be the matching with $\mu^{(2)}(s_2)=c_1$. The lottery $P = \tfrac{1}{2}\mu^{(1)} + \tfrac{1}{2}\mu^{(2)}$ assigns each student to their top choice with probability $1/2$. While such randomization does not eliminate instability ex post, it redistributes instability more evenly across agents, thereby mitigating systematic disadvantage and yielding a more balanced notion of ex-ante fairness.

%
A random assignment is represented by a marginal probability matrix $P \in [0,1]^{|S| \times |C|}$, where each entry $P_{s,c}$ denotes the probability that student $s$ is assigned to school $c$. Let $\mathcal{F}$ denote the set of feasible random assignments if it satisfies the following two conditions: each student is assigned to exactly one option (including the outside option $c_0$), and the global capacity constraint is satisfied in expectation up to slack $k$. Formally,
\begin{equation}
\mathcal{F} = \left\{ P \in [0,1]^{|S|\times|C|} \;\middle|\; \sum_{c \in C} P_{s,c} = 1, \forall s \in S; \quad \sum_{c \in C \setminus \{c_0\}} \max\left(0, \sum_{s \in S} P_{s,c} - q_c\right) \le k \right\}.
\end{equation}
%
%
%

We next introduce differentiable notions of ex-ante envy and ex-ante waste, which respectively extend fairness and non-wastefulness to probabilistic assignments.
For any students $s,s' \in S$ and school $c \in C$, we define the \emph{ex-ante envy intensity} of $s$ toward $s'$ at school $c$ under assignment matrix $P$ as:
\begin{equation}
\label{eq:envy_intensity}
    e_{s,s',c}(P) = P_{s',c} \cdot \max\left\{ v_{s,c} - \bar{V}_s(P), \, 0 \right\} \cdot \mathbb{I}(u_{s,c} > u_{s',c}),
\end{equation}
where $\bar{V}_s(P) = \sum_{d \in C} P_{s,d} v_{s,d}$ denotes the expected utility of student $s$ under the probabilistic assignment $P$, and $\mathbb{I}(\cdot)$ is the indicator function. This intensity captures three critical factors: (i) the probability mass $P_{s',c}$ allocated to student $s'$, (ii) the potential marginal gain in expected utility if $s$ were assigned to $c$ instead, and (iii) the priority condition $\mathbb{I}(u_{s,c} > u_{s',c})$ which ensures that envy is only \emph{justified} when $s$ has a strictly higher priority than $s'$ at school $c$.

Aggregating over all students $s' \in S$ and all schools $c \in C$,  we define the total ex-ante envy experienced by student $s$ as the normalized intensity across the population:
\begin{equation}
\label{eq:envy_instensity}
    E_s(P) = \frac{1}{|S|}\sum_{s' \in S}\sum_{c \in C} e_{s,s',c}(P).
\end{equation}
Let $E(P) = (E_s(P))_{s \in S} \in \mathbb{R}_+^{|S|}$ denote the resulting envy vector. This construction captures how ex-ante envy is distributed across students under random assignments. Unlike discrete notions of justified envy, the continuity of $E(P)$ allows the mechanism to minimize not only the aggregate magnitude of justified envy but also to penalize its distribution, ensuring that no single agent suffers disproportionately, especially when global constraints render perfect fairness unattainable.

We extend non\mbox{-}wastefulness to random assignments through a slack-modulated intensity formulation. Let $\ell_c(P) = \sum_{s \in S} P_{s,c}$ denote the expected load at school $c$. Define the total overflow as $\Omega(P) = \sum_{c \in C} \max\{\ell_c(P) - q_c, 0\}$. 
Slack arises from two complementary sources. The \emph{local slack} at school $c$ is given by $s_c^{\mathrm{loc}}(P) = \max\{q_c - \ell_c(P), 0\}$, capturing unused target capacity. The \emph{global slack} is given by $s^{\mathrm{glob}}(P) = \max\{k - \Omega(P), 0\}$, capturing the remaining allowable over-enrollment under the global slack. 
The ex-ante waste intensity for student $s$ at school $c$ is then defined as
\begin{equation}
\label{eq:waste_intensity}
    w_{s,c}(P) = \max\left\{ v_{s,c} - \bar{V}_s(P), \, 0 \right\}  \cdot
    \min\!\left\{1,\, s_c^{\mathrm{loc}}(P) + s^{\mathrm{glob}}(P) \right\}
\end{equation}
where $\bar{V}_s(P)$ is the expected utility as defined previously. The waste intensity captures the potential utility gain for student $s$ from being assigned to school $c$, scaled by the extent to which such an assignment is feasible under the available slack. By averaging over all schools, we define the total ex-ante waste for student $s$ as 
\begin{equation}
\label{envy_vector}
W_s(P) = \frac{1}{|C|} \sum_{c \in C} w_{s,c}(P)
\end{equation}
Let $W(P) = (W_s(P))_{s \in S} \in \mathbb{R}_+^{|S|}$ denote the resulting \emph{waste vector}. Together, $E(P)$ and $W(P)$ provide a unified, differentiable representation of ex-ante instability, characterizing how fairness and non\mbox{-}wastefulness violations are distributed across agents.

Given an instance $\mathcal{I}$, let $\mathcal{M}: \mathcal{V} \mapsto \Delta(\mathcal{X})$ denote a mechanism that maps reported preference profiles to a probability distribution over feasible matchings $\mathcal{X}$. A mechanism $\mathcal{M}$ is \emph{strategyproof} (SP) if, for every student $s \in S$, any true preference profile $V$, and any possible misreport $V'_s$, it holds that:
\begin{equation}
\label{eq:SP_definition}
\mathbb{E}_{\mu \sim \mathcal{M}(V_s, V_{-s})} \left[ v_{s,\mu(s)} \right]
\;\geq\;
\mathbb{E}_{\mu \sim \mathcal{M}(V'_s, V_{-s})} \left[ v_{s,\mu(s)} \right],
\end{equation}
where $V_{-s}$ denotes the reported preferences of all students other than $s$, and the expectation is taken over the internal randomness of the mechanism. Under this definition, truthful reporting is a dominant strategy for all students, ensuring that no agent can gain a higher expected utility by misrepresenting their preferences.

\section{MenuNet Mechanism}

We begin by introducing a straightforward deterministic menu mechanism, which serves as the basis of our neural approach. Instead of directly producing assignments, the mechanism decouples the allocation process into two stages: \emph{menu design} and \emph{individual choice}. For each student, the mechanism  specifies a menu of available schools, from which the student selects their most preferred option. To ensure strategy-proofness, the menu must be constructed independently of each student's own reported preferences, although it may depend on the reports of other students. However, the mechanism does not guarantee feasibility, as multiple students may select the same popular schools, potentially resulting in assignments that violate the capacity constraints.



\subsection{MenuNet: A Neural Menu Mechanism}

We next introduce \texttt{MenuNet}, a neural mechanism that generates personalized probabilistic menus and optimizes system-level objectives via differentiable loss functions. This framework preserves strategy-proofness by construction while enabling the mechanism to internalize complex feasibility and stability constraints through end-to-end training. 
At a high level, \texttt{MenuNet}consists of two stages: (i) \emph{menu generation}, where each student is assigned a personalized probability distribution over schools, and (ii) \emph{assignment}, where the final allocation is realized based on the student’s preferences and the generated menu. 

\textbf{Menu Generation:} For each student $s \in S$, we construct a personalized \emph{probabilistic menu} using a shared neural network $\mathcal{F}_\theta$ parameterized by $\theta$. The input to the network captures the global market state, including the priority matrix $U$, the capacity vector $q$, and the student preferences. To satisfy strategy-proofness, we adopt a leave-one-out construction for the preference input: for each student $s$, the network only observes the preferences of all other agents, denoted by $V_{-s}$. The personalized menu for student $s$ is thus given by 
\[
p_s = \mathcal{F}_\theta(V_{-s}, U, q) \in [0,1]^m,
\]
where each component $p_{s,c}$ represents the probability that school $c$ is available to student $s$. We fix the availability of the outside option to $p_{s,c_0} = 1$, ensuring that every student has at least one guaranteed feasible selection.

\textbf{Assignment:} Given the probabilistic menu $p_s$, the final assignment probability is realized through a sequential choice rule that reflects the student's preference ranking. For each student $s$, let the schools be indexed such that $c_1 \succ_s c_2 \succ_s \cdots \succ_s c_m$ according to their reported utilities. The marginal probability $P_{s,c_j}$ of student $s$ being assigned to their $j$-th preferred school $c_j$ is defined as the joint probability that $c_j$ is available and all more preferred schools $\{c_k\}_{k < j}$ are unavailable:
\begin{equation}
P_{s,c_j} = p_{s,c_j} \prod_{k < j} (1 - p_{s,c_k}).
\end{equation}
By this construction, the total assigned probability satisfies $\sum_{c \in C} P_{s,c} \le 1$, with any residual mass implicitly allocated to the outside option. To ensure numerical stability during the backward pass and avoid vanishing gradients in long preference chains, we compute the assignment probabilities in the log-domain:
\begin{equation}
\log P_{s,c_j} = \log p_{s,c_j} + \sum_{k < j} \log(1 - p_{s,c_k}).
\end{equation}
The resulting assignment matrix $P$ is a continuous and differentiable function of the menu parameters $p_s$, enabling the use of gradient-based optimization to minimize the envy and waste losses defined previously.

\begin{theorem}
\label{theo:MenuNet:SP}
The MenuNet mechanism is strategyproof for students.
\end{theorem}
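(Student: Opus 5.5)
The plan is to show that student $s$'s menu does not depend on $s$'s own report. Given that, the sequential choice rule is simply ``take your reported favourite among the realized available schools.'' Reporting truthfully makes this realized choice pointwise optimal under $s$'s true utilities, so truthful reporting maximizes expected utility as well.

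\textbf{Step 1: menu invariance.} Fix $s$, the true profile $V$ and any misreport $V'_s$. The menu is
\[
p_s=\mathcal{F}_\theta(V_{-s},U,q).
\]
It depends only on the other students' reports, the priorities and the capacities, and none of these change when $s$ deviates. So under both $(V_s,V_{-s})$ and $(V'_s,V_{-s})$, student $s$ faces the same vector $p_s$, with $p_{s,c_0}=1$ in both cases.

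\textbf{Step 2: coupling representation.} The formula $P_{s,c_j}=p_{s,c_j}\prod_{k<j}(1-p_{s,c_k})$ is exactly the law of the following procedure. Draw independent availability indicators $A_c\sim\mathrm{Bernoulli}(p_{s,c})$, with $A_{c_0}=1$. Then assign $s$ to the reported-most-preferred school in the set $\{c: A_c=1\}$; any residual mass goes to $c_0$, which matches the paper's convention. We realize the truthful and the misreported outcomes on the same draw $A$, which is legitimate by Step 1 because $p_s$ is the same in both cases. Under truthful reporting, $s$ obtains
\[
\max\{v_{s,c}: A_c=1\}.
\]
This holds because $c_0$ is always available with $v_{s,c_0}=0$, and it beats every unacceptable school since those have $v_{s,c}<0$. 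Under the misreport, $s$ obtains $v_{s,c'}$ for some $c'$ with $A_{c'}=1$, so $v_{s,c'}\le\max\{v_{s,c}:A_c=1\}$. Taking expectations over $A$ gives inequality~\eqref{eq:SP_definition}.

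\textbf{Main obstacle.} The delicate point is that $s$'s utility depends only on $s$'s own marginal row $P_s$, and that this row is generated by $p_s$ and $s$'s own report alone. One concern is whether any later global step, such as enforcing feasibility, could feed $V_s$ back into $s$'s outcome. If such a step exists, I would argue that it acts only through other students' rows or through inputs independent of $V_s$. If instead the mechanism is exactly the two-stage process described above, the argument is complete.

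A second concern is ties in reported utilities and unacceptable schools ranked after $c_0$. Truncating $s$'s list at $c_0$ handles both. Schools ranked below the always-available $c_0$ receive zero probability, so their ordering never affects the outcome, and ties are resolved by a fixed rule.
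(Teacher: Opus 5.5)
Your proof is correct, and it takes a different route from the paper after the shared first step. Both arguments begin with menu invariance: the leave-one-out input makes $p_s$ independent of $V_s$.

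The paper then works directly with the closed-form expected utility $\sum_j p_{s,c_j}\prod_{k<j}(1-p_{s,c_k})\,v_{s,c_j}$ and uses an exchange argument. Swapping two adjacent schools $c,c'$ changes utility by $\alpha\,p_{s,c}p_{s,c'}(v_{s,c}-v_{s,c'})$. Any misreported order can be sorted into the true order by adjacent swaps, each of which weakly helps.

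You instead read the product formula as ``first available school under independent availability draws.'' You then couple the truthful and misreported outcomes on the same draw, so truth-telling wins realization by realization. This route has three advantages:
\begin{itemize}
\item it avoids the algebra and the sorting-by-swaps step;
\item it gives a stronger conclusion: truthful reporting is optimal for every realized availability set, so the truthful outcome first-order stochastically dominates any misreport's outcome for every utility representation of the true ranking;
\item the pointwise comparison never uses independence of the availability events, so it applies verbatim to any report-independent random menu, correlated or not.
\end{itemize}
The paper's computation, in exchange, gives the explicit gain $\alpha p_{s,c}p_{s,c'}(v_{s,c}-v_{s,c'})$ from correcting each adjacent inversion, which shows when a deviation is strictly costly. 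Your worry about a later feasibility step does not arise: the mechanism is exactly the two-stage process, and both proofs rest on $s$'s utility depending only on the row $P_s$.

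One point of precision. Under the paper's literal rule, the ranking covers all $m$ real schools and $c_0$ receives only the residual mass. The truthful outcome is therefore the best available \emph{real} school, falling back to $c_0$ only if no real school is available. This is not $\max\{v_{s,c}:A_c=1\}$ when the only available real schools are unacceptable. Your pointwise comparison still goes through, because a misreport faces the same fallback. But truncating at $c_0$ is then a harmless modification of the mechanism, not a reading of it, and this sits awkwardly with your remark that residual mass going to $c_0$ ``matches the paper's convention.''
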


\begin{proof}
To prove that the \texttt{MenuNet} mechanism $\mathcal{M}$ is strategyproof, we must show that for any student $s \in S$, reporting true preferences $V_s$ maximizes their expected utility, satisfying the condition:
\begin{equation}
\mathbb{E}_{\mu \sim \mathcal{M}(V_s, V_{-s})} \left[ v_{s,\mu(s)} \right] \geq \mathbb{E}_{\mu \sim \mathcal{M}(V'_s, V_{-s})} \left[ v_{s,\mu(s)} \right], \forall V'_s.
\end{equation}

The mechanism $\mathcal{M}$ decouples the allocation into two stages. First, a probabilistic menu $p_s$ is generated via the neural network $\mathcal{F}_\theta(V_{-s}, U, q)$. By the \emph{leave-one-out} architecture, $p_s$ is functionally independent of student $s$'s own report. Thus, for any misreport $V'_s$, the generated menu remains constant: $p_s(V_s, V_{-s}) = p_s(V'_s, V_{-s}) = p_s$. 

Second, given a fixed $p_s$, student $s$ is assigned to schools based on their reported ranking. Let the reported ranking be $c_1 \succ_s c_2 \succ_s \cdots \succ_s c_m$. The expected utility of student $s$ is given by:
\begin{equation}
U_s(V_s \mid p_s) = \sum_{j=1}^m \left( p_{s,c_j} \prod_{k < j} (1 - p_{s,c_k}) \right) v_{s,c_j}.
\end{equation}
This assignment rule can be viewed as a sequential application process where the student applies to schools in the reported order, and each school $c$ admits the student independently with probability $p_{s,c}$ if reached. We show that $U_s$ is maximized if and only if the student reports the true descending order of utilities.

Consider any two schools $c$ and $c'$ that are adjacent in the reported ranking, appearing at positions $j$ and $j+1$. Let $\alpha = \prod_{k < j} (1 - p_{s,c_k})$ be the probability that the student is not admitted by any school ranked higher than $j$.
If $c$ is ranked before $c'$, the contribution of these two positions to the expected utility is:
\begin{equation}
\Delta = \alpha \left[ p_{s,c} v_{s,c} + (1 - p_{s,c}) p_{s,c'} v_{s,c'} \right].
\end{equation}
If the student swaps the order of $c$ and $c'$, the new contribution becomes:
\begin{equation}
\Delta' = \alpha \left[ p_{s,c'} v_{s,c'} + (1 - p_{s,c'}) p_{s,c} v_{s,c} \right].
\end{equation}
The difference between the two cases is:
\begin{equation}
\Delta - \Delta' = \alpha \, p_{s,c} p_{s,c'} \left( v_{s,c} - v_{s,c'} \right).
\end{equation}
Since $\alpha, p_{s,c}, p_{s,c'} \in [0,1]$, the sign of the difference is determined solely by $(v_{s,c} - v_{s,c'})$. Specifically, if $v_{s,c} \geq v_{s,c'}$, placing $c$ before $c'$ weakly increases the expected utility. 

Any misreport $V'_s$ that induces an ordering different from the true utility ranking can be transformed back to the truthful ranking through a finite sequence of such adjacent swaps. Since each swap toward the true order non-decreasingly improves the expected utility, the truthful report $V_s$ must be an optimal strategy. Because the menu $p_s$ is invariant to $V_s$, student $s$ has no incentive to misreport, completing the proof.
\end{proof}

\subsection{Optimization Objectives and Training}
In our framework, market instances are characterized by the tuple $(V, U)$, where student preferences and school priorities are sampled from a distribution representative of the target market. The network $\mathcal{F}_\theta$ is trained offline to minimize a composite objective over a large set of sampled instances. A key advantage of this approach is that after training, the mechanism requires no instance-specific optimization; for any newly encountered market, the assignment matrix $P$ is obtained via a single forward pass of the network. This ensures that the mechanism is computationally efficient and scalable to real-time allocation problems.

We train \texttt{MenuNet} by optimizing a loss function $\mathcal{L}(P)$ that reflects the core desiderata of market design. For a given assignment $P$, the total loss is defined as:
\begin{equation}
\mathcal{L}(P) = \lambda_{\mathrm{w}} \mathcal{L}_{\mathrm{welf}}(P) + \lambda_{\mathrm{c}} \mathcal{L}_{\mathrm{capa}}(P) + \lambda_{\mathrm{s}} \mathcal{L}_{\mathrm{stab}}(P),
\end{equation}
where the coefficients $\lambda$ control the trade-offs between competing objectives. Specifically, $\mathcal{L}_{\mathrm{welf}}$ represents the negative social welfare, $\mathcal{L}_{\mathrm{capa}}$ penalizes violations of the global capacity constraint $k$, and $\mathcal{L}_{\mathrm{stab}}$ captures the magnitude and distribution of instability through the ex-ante envy and waste vectors defined previously.

\textbf{Welfare Loss.}
To encourage assignments that align with student preferences and promote social efficiency, we incorporate an aggregate welfare metric into the learning objective. Minimizing $\mathcal{L}_{\mathrm{welf}}$ pushes the mechanism toward allocations where students receive higher-ranked schools in expectation. We define the welfare loss as the negative average expected utility across all students:
\begin{equation}
\mathcal{L}_{\mathrm{welf}}(P) = - \frac{1}{|S|} \sum_{s \in S} \bar{V}_s(P) = - \frac{1}{|S|} \sum_{s \in S} \sum_{c \in C} P_{s,c} \, v_{s,c}.
\end{equation}

\textbf{Feasibility Loss: The $K$-Barrier Penalty.}
To integrate the global capacity constraint into a gradient-based framework, we employ a smooth penalty based on the total expected overflow. Let $\ell_c(P) = \sum_{s \in S} P_{s,c}$ denote the expected load at school $c$, and define the total overflow as 
\[
\Omega(P) = \sum_{c \in C} \max\{ \ell_c(P) - q_c,\, 0 \}.
\]
Given the permissible slack $k$, we define the \emph{$K$-barrier penalty} as a piecewise function:
\begin{equation}
\mathcal{L}_{\mathrm{capa}}(P) = 
\begin{cases} 
\alpha \, \Omega(P), & \text{if } \Omega(P) \le k, \\
\alpha k + \beta (\Omega(P) - k)^2 + \gamma (\Omega(P) - k), & \text{if } \Omega(P) > k,
\end{cases}
\end{equation}
where $\alpha, \beta, \gamma > 0$ are hyperparameters controlling the penalty strength. When the overflow remains within the allowable slack, the penalty is linear, encouraging the mechanism to efficiently utilize the flexible capacity. Once the overflow exceeds the threshold $k$, the penalty transitions to a \emph{super-linear regime}, combining quadratic and linear terms to sharply penalize further violations. In our implementation, we set $\gamma = 10$ to ensure a sufficient gradient signal at the transition boundary.

\textbf{Stability Loss.}
Under global feasibility constraints, it is generally impossible to eliminate all blocking pairs, as illustrated by the non-existence result in Example~\ref{example:non-existence}. We therefore shift our focus from total elimination to regulating the \emph{distributional profile} of instability. We characterize instability via two metrics: the envy vector $E(P) \in \mathbb{R}_+^{|S|}$ and the waste vector $W(P) \in \mathbb{R}_+^{|S|}$, which capture ex-ante violations of fairness and efficiency, respectively.
For each component, we adopt a structured aggregation combining three principles: \emph{egalitarian} (reducing total magnitude), \emph{Rawlsian} (protecting the worst-off agent), and \emph{balanced} (discouraging concentration). The resulting losses $\mathcal{L}_{\mathrm{envy}}$ and $\mathcal{L}_{\mathrm{waste}}$ are defined as weighted combinations of these terms, enabling control over both the magnitude and distribution of instability.

\textbf{Egalitarian Principle: } This objective aims to reduce the aggregate magnitude of dissatisfaction across the population. We employ the mean squared error (MSE) of the vectors to penalize larger individual violations more heavily than smaller ones:
\begin{equation}
\label{eq:loss:ega}
\mathcal{L}_{\mathrm{egal}}^{E}(P) = \frac{1}{|S|} \sum_{s \in S} \big(E_s(P)\big)^2, \quad \mathcal{L}_{\mathrm{egal}}^{W}(P) = \frac{1}{|S|} \sum_{s \in S} \big(W_s(P)\big)^2.
\end{equation}

\textbf{Rawlsian Principle: } This principle prioritizes the worst-off agents by minimizing the maximum dissatisfaction experienced by any single student. Since the max operator is non-differentiable and unsuitable for gradient-based optimization, we approximate the $L_\infty$ norm (the maximum operator) using a smooth \emph{log-sum-exp} operator:
\begin{equation}
\mathcal{L}_{\mathrm{rawl}}^{E}(P) = \tau_E \log \sum_{s \in S} \exp\left(\frac{E_s(P)}{\tau_E}\right), \quad \mathcal{L}_{\mathrm{rawl}}^{W}(P) = \tau_W \log \sum_{s \in S} \exp\left(\frac{W_s(P)}{\tau_W}\right),
\end{equation}
where $\tau > 0$ is a temperature parameter. As $\tau \to 0$, this objective converges to the classical minimax criterion, effectively shielding individuals from extreme instability.

\textbf{Balanced Principle: } To prevent certain agents from bearing a disproportionate share of instability, we define the balanced objectives as the variance of the dissatisfaction vectors:
\begin{equation}
\label{eq:loss:variance}
\mathcal{L}_{\mathrm{bal}}^{E}(P) = \operatorname{Var}(E(P)), \quad \mathcal{L}_{\mathrm{bal}}^{W}(P) = \operatorname{Var}(W(P)).
\end{equation}
By penalizing the variance, we encourage the mechanism to distribute unavoidable envy and waste more uniformly across the student body.

The overall stability loss $\mathcal{L}_{\mathrm{stab}}(P)$ is the weighted sum of these components, allowing for flexible control over the trade-offs between magnitude and fairness:
\begin{align}
    \mathcal{L}_{\mathrm{envy}}(P) &= \lambda_1^E \mathcal{L}_{\mathrm{egal}}^E + \lambda_2^E \mathcal{L}_{\mathrm{rawl}}^E + \lambda_3^E \mathcal{L}_{\mathrm{bal}}^E, \\
    \mathcal{L}_{\mathrm{waste}}(P) &= \lambda_1^W \mathcal{L}_{\mathrm{egal}}^W + \lambda_2^W \mathcal{L}_{\mathrm{rawl}}^W + \lambda_3^W \mathcal{L}_{\mathrm{bal}}^W.
\end{align}
The final objective is given by $\mathcal{L}_{\mathrm{stab}}(P) = \mathcal{L}_{\mathrm{envy}}(P) + \mathcal{L}_{\mathrm{waste}}(P)$, providing a comprehensive, differentiable measure of ex-ante instability for end-to-end training.

\section{Experiments}
We evaluate the performance of \texttt{MenuNet} through extensive simulations across diverse market scales. To benchmark its efficacy, we compare it against two foundational strategy-proof mechanisms: Random Serial Dictatorship (RSD), which prioritizes non-wastefulness, and Deferred Acceptance (DA), which prioritizes fairness under global constraints. The results indicate that \texttt{MenuNet} achieves a consistent trade-off between ex-ante envy and waste. In addition, it yields allocations in which instability is more evenly distributed across agents, avoiding the concentration of violations observed in the baseline mechanisms. From a computational perspective, the proposed approach is efficient and scalable. Training remains tractable on a CPU, requiring approximately 10 minutes for markets with 1{,}000 students. 

\subsection{Experimental Setup: Data Generation}
We evaluate the mechanism across a range of market scales, with the student population varying from $100$ to $2{,}000$ in increments of $100$, while fixing the number of schools at $10$. For each configuration, we generate $1{,}000$ training, $200$ validation, and $100$ test instances.

Student preferences are generated using the Mallows model \citep{LuCr11a}, which characterizes preference profiles as noisy perturbations centered around a reference ranking $\succ^{\mathrm{ref}}_s$. The dispersion parameter $\phi_{\mathrm{student}} \in (0,1]$ dictates the degree of heterogeneity; smaller values signify stronger correlation among student preferences, while $\phi_{\mathrm{student}} = 1$ corresponds to uniform randomness.  We refer to Appendix~\ref{appendix:mallows} for a formal description of the model and the sampling procedure.
To reflect realistic constraints on student choices, we implement truncated preferences where each student identifies only their top-$k$ schools as acceptable. Preference utilities $V$ are assigned as a decreasing function of rank for acceptable schools, supplemented by small random perturbations to facilitate tie-breaking. The outside option $c_0$ is consistently assigned zero utility.

School priorities are sampled using an analogous Mallows structure, where a reference ranking $\succ^{\mathrm{ref}}_c$ over students is perturbed by a dispersion parameter $\phi_{\mathrm{school}}$. Priority scores are derived from the resulting ranks, with identical priority assigned to students for the outside option $c_0$ given its unlimited capacity. 
The total capacity is set proportional to the number of students, with ratio $0.6$ in all experiments. Individual school capacities are sampled from a Gaussian distribution with mean equal to the average capacity per school and a specified standard deviation, and then discretely adjusted to ensure that the total capacity exactly matches the target level. 
We allow limited violations of capacity constraints via a global slack parameter $K$, which bounds the total overflow. In all experiments, we set $K = 0.05 \cdot |S|$, allowing up to $5\%$ of the number of students as aggregate over-enrollment.

\subsection{Implementation Details}
The neural mechanism $\mathcal{F}_\theta$ is parameterized as a three-layer multilayer perceptron (MLP) with $256$ hidden units per layer. To enhance training stability in large-scale matching environments, each hidden layer incorporates Layer Normalization and ReLU activation. The model is trained using the Adam optimizer with a learning rate of $1 \times 10^{-4}$, a batch size of $16$, and a total of $30$ epochs. To prevent gradient explosion and ensure numerical stability, we apply gradient clipping with a maximum norm of $1.0$. During training, the sequential survival-based construction uses a temperature parameter $\tau_{\mathrm{prio}} = 100.0$ for differentiable priority comparisons, while the final evaluation is performed under a fixed scale to ensure exactness.

All experiments are implemented using the PyTorch framework.The computational cost of the proposed approach scales approximately linearly with the market size, and this trend is consistent across both CPU and GPU implementations. For instance, for a market with $1,000$ students, the training procedure completes in approximately $10$ minutes on an Apple M4 Max CPU, and is further accelerated to approximately $1$ minute on an NVIDIA RTX 5090 GPU. The feasibility of CPU execution is a significant practical advantage, as it demonstrates that the mechanism is computationally tractable without the need for specialized hardware. This positioning suggests that GPU acceleration is an optional speedup rather than a fundamental requirement for scalability, making the approach accessible for a wide range of institutional matching applications. Detailed scaling curves for both CPU and GPU performance are provided in the appendix.

\subsection{Experimental Results}
We benchmark the performance of \texttt{MenuNet} against two foundational strategy-proof mechanisms: Random Serial Dictatorship (RSD) and Deferred Acceptance (DA). To ensure a fair comparison, both baselines are augmented to accommodate the global slack constraints as detailed in Appendix~\ref{appendix:baseline}.

We evaluate envy and waste across multiple aggregation criteria, including the population mean, egalitarian (MSE), Rawlsian (maximum), and variance-based measures. Across all metrics, \texttt{MenuNet} consistently achieves a superior trade-off between fairness and efficiency. Crucially, the mechanism distributes instability more evenly across the student population, effectively mitigating the concentration of violations observed in DA and RSD. This shift toward a more balanced profile of dissatisfaction is quantitatively depicted in Figure~\ref{fig:metrics} of the Appendix.

Regarding feasibility, \texttt{MenuNet} respects the global flexibility budget $K$ with high precision. As shown in Figure~\ref{fig:overflow} of the Appendix, the total expected overflow remains closely aligned with the prescribed limit, with only negligible deviations across a small subset of instances. Notably, we employ a consistent set of hyperparameters across all market sizes. While further tuning of the penalty weights for each specific market scale could potentially eliminate these minor slack violations entirely, our results demonstrate that a unified parameter configuration already provides robust and reliable performance.

In terms of social welfare, \texttt{MenuNet} exhibits a slight performance gap compared to RSD and DA. This outcome is anticipated, as the mechanism explicitly internalizes the costs of stability and global feasibility within its objective. These findings underscore the inherent tension between welfare maximization and distributional stability, reinforcing the value of a multi-objective approach that prioritizes system-wide fairness.

Overall, our experiments validate \texttt{MenuNet} as a principled framework for balancing competing desiderata in market design. It achieves substantial improvements in distributional stability while maintaining computational tractability and preserving the fundamental property of strategy-proofness.

\section{Conclusion}
We study mechanism design under complex distributional constraints, where stable matchings may fail to exist. We propose \texttt{MenuNet}, a neural framework that shifts the objective from feasibility to the principled distribution of unavoidable instability. By ensuring strategy-proofness by construction and optimizing vector-valued fairness objectives, \texttt{MenuNet} enables fine-grained control over the distribution of dissatisfaction. Experiments under global capacity slack show that it effectively balances fairness and non-wastefulness, outperforming classical benchmarks while remaining computationally efficient at scale. These results highlight the potential of learning-based menu mechanisms as a practical paradigm for constrained matching.

Several directions remain for future work. First, beyond standard baselines, it would be valuable to compare against specialized strategy-proof mechanisms tailored to structured constraints, such as those based on M$\natural$-convexity~\citep{KTY18a}. Second, designing more expressive and symmetry-aware neural architectures may further improve performance and generalization. Finally, extending the framework to richer classes of multi-dimensional distributional constraints is an important step toward broader applicability.

\newpage

\bibliographystyle{unsrtnat}
\bibliography{reference}  


\newpage
\appendix
\begin{figure}[tb]
    \centering

    \begin{subfigure}{0.9\textwidth}
        \centering
        \includegraphics[width=0.8\linewidth]{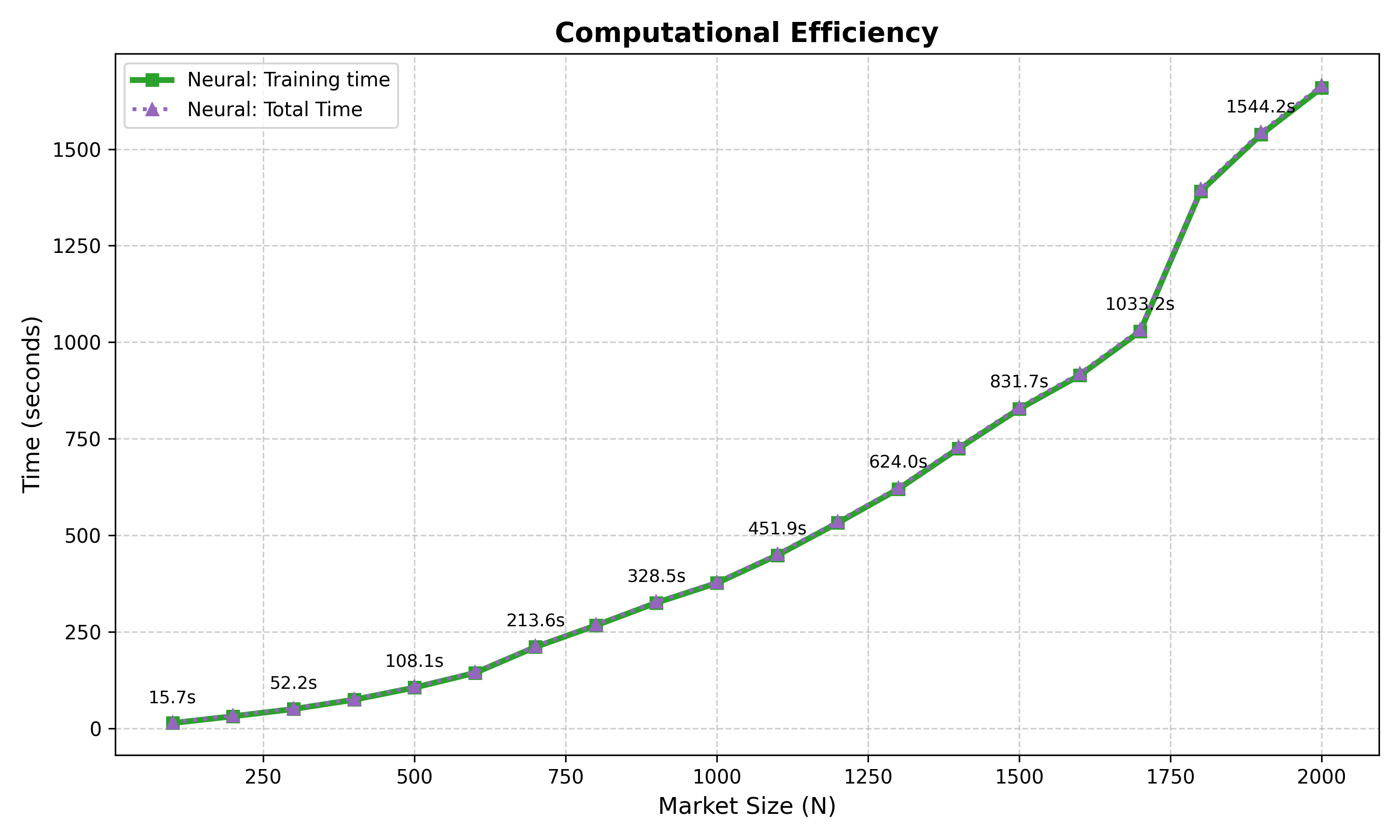}
        \caption{CPU runtime scaling on markets up to 2,000 students (M4 Max).}
    \end{subfigure}

    \vspace{0.5em}

    \begin{subfigure}{0.9\textwidth}
        \centering
        \includegraphics[width=0.8\linewidth]{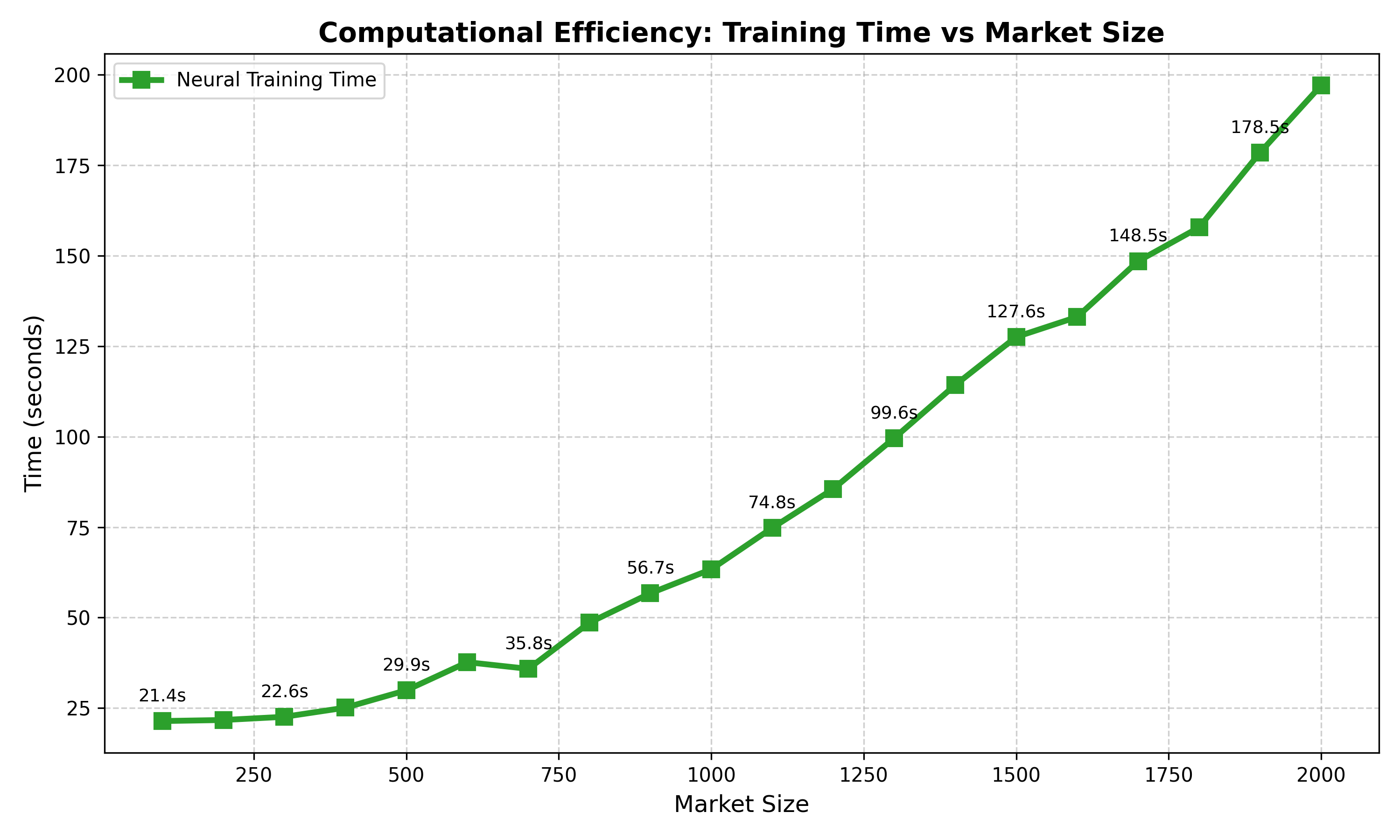}
        \caption{GPU runtime scaling on larger markets up to 2,000 students (RTX 5090).}
    \end{subfigure}

    \caption{Training time scaling with market size.}
    \label{fig:time}
\end{figure}

\begin{figure}[tb]
    \centering
    \includegraphics[width=0.99\textwidth]{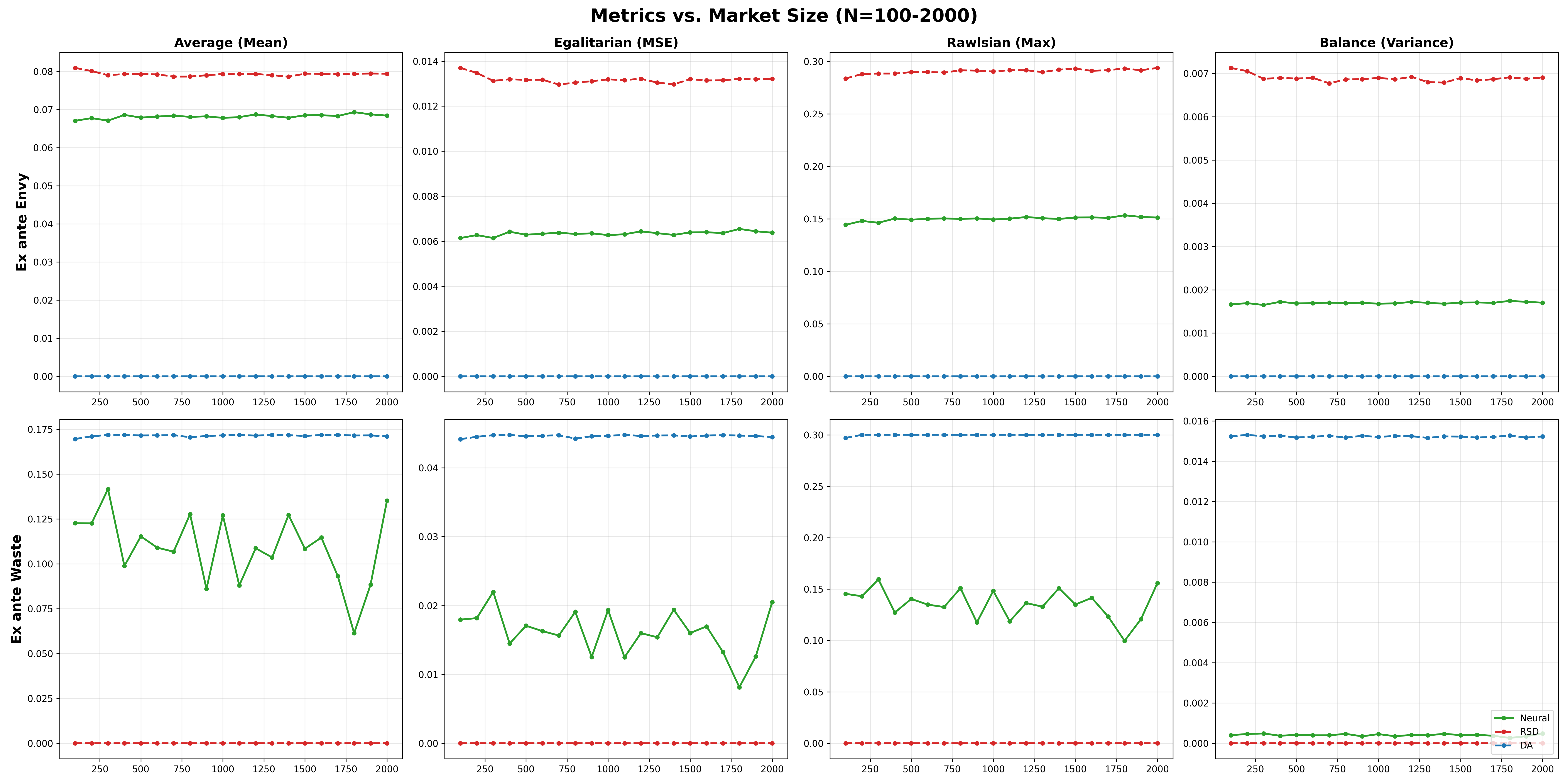}
    \caption{Ex ante fairness and non-wastefulness metrics.}
    \label{fig:metrics}
\end{figure}

\begin{figure}[tb]
    \centering
    \includegraphics[width=0.8\linewidth]{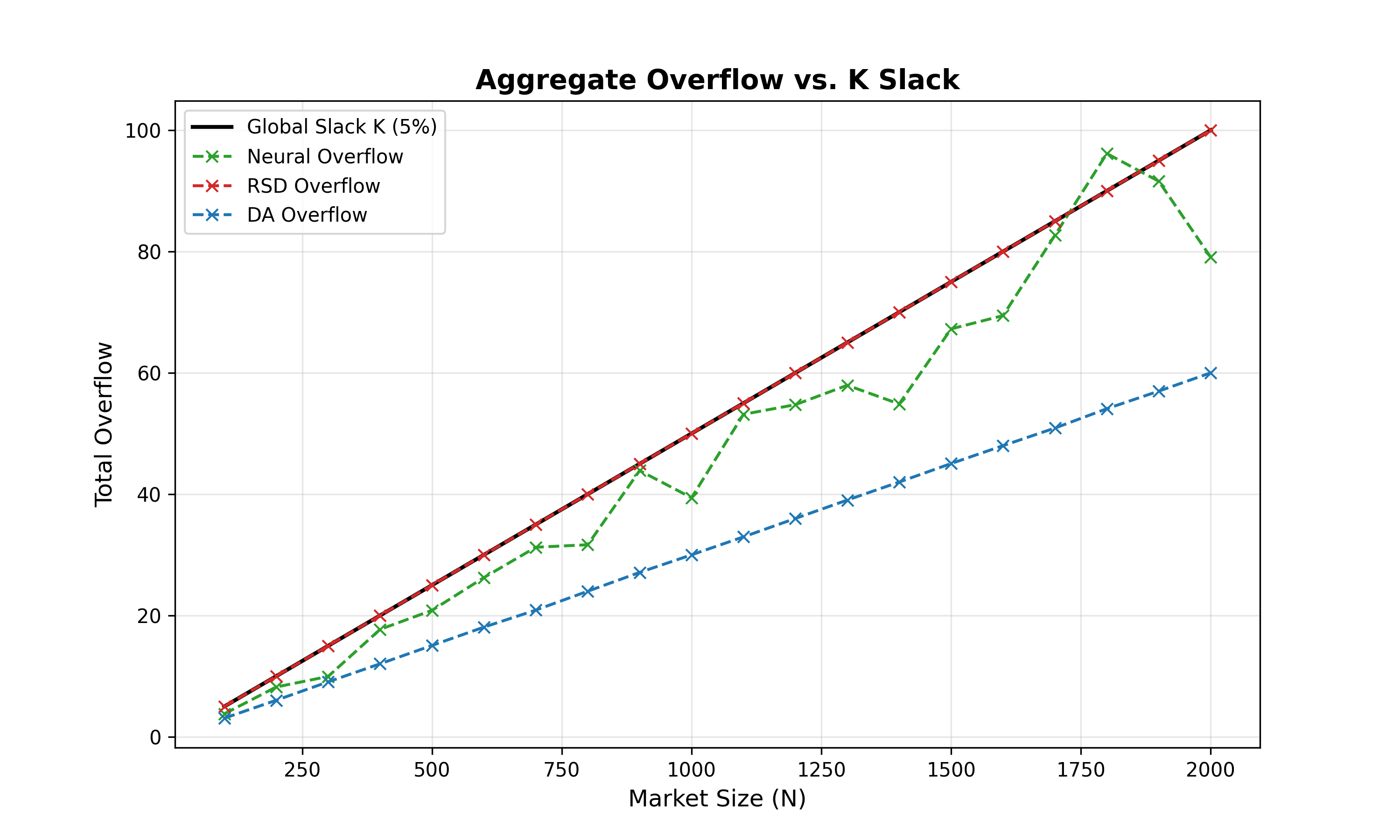}
    \caption{Capacity usage under flexible quotas.}
    \label{fig:overflow}
\end{figure}

\section{Mallows Model and RIM Sampling}
\label{appendix:mallows}
The Mallows model \citep{LuCr11a} is a distance-based exponential family distribution over the space of permutations. Let $\mathcal{S}_m$ denote the set of all permutations over $m$ items. Given a reference ranking $\succ^{\mathrm{ref}} \in \mathcal{S}_m$ and a dispersion parameter $\phi \in (0,1]$, the probability of observing a ranking $\succ \in \mathcal{S}_m$ is defined as:
\begin{equation}
    \mathbb{P}(\succ \mid \phi, \succ^{\mathrm{ref}}) = \frac{1}{Z(\phi)} \phi^{d_{\tau}(\succ, \succ^{\mathrm{ref}})},
\end{equation}
where $d_{\tau}(\cdot,\cdot)$ denotes the Kendall tau distance, which counts the number of pairwise adjacent swaps required to transform one ranking into another. The normalization constant (partition function) $Z(\phi)$ is independent of $\succ^{\mathrm{ref}}$ and admits a convenient closed-form expression:
\begin{equation}
    Z(\phi) = \prod_{i=1}^{m} \sum_{j=0}^{i-1} \phi^j = \prod_{i=1}^{m} \frac{1-\phi^i}{1-\phi}.
\end{equation}
As $\phi \to 0$, the distribution concentrates its mass entirely on the reference ranking $\succ^{\mathrm{ref}}$, whereas $\phi = 1$ recovers the uniform distribution over $\mathcal{S}_m$.

To perform efficient and exact sampling from this distribution, we employ the \emph{Repeated Insertion Model} (RIM). Suppose the reference ranking is fixed as $(c_1, \dots, c_m)$. The sampled ranking is constructed incrementally by inserting each item $c_i$ at position $k \in \{1,\dots,i\}$ with probability:
\begin{equation}
    \mathbb{P}(\text{item } c_i \text{ is at position } k) = \frac{\phi^{i-k}}{\sum_{j=1}^{i} \phi^{i-j}}.
\end{equation}
Starting from $c_1$, each subsequent item $c_i$ is inserted into the existing permutation of $\{c_1, \dots, c_{i-1}\}$. This procedure generates exact samples from the Mallows distribution under the Kendall tau distance in $O(m^2)$ time, which is highly efficient for the market sizes considered in our experiments.

\section{Two Strategy-Proof Baseline Mechanisms}
\label{appendix:baseline}
To benchmark the performance of \texttt{MenuNet}, we consider two classical mechanisms that preserve strategy-proofness while exhibiting complementary trade-offs between fairness and efficiency under global capacity constraints.

\paragraph{Random Serial Dictatorship (RSD).}
We consider a variant of \emph{Random Serial Dictatorship} adapted to the flexible capacity setting. The mechanism first samples a random permutation (master list) over students $\sigma \in \mathcal{S}_N$ with uniform probability. Students are processed sequentially according to $\sigma$. When student $s$ is processed, they are assigned to their most preferred school $c$ that is still \emph{globally feasible}. In this context, a school $c$ remains feasible if the current total enrollment across all schools does not exceed the aggregate capacity $\sum_{j \in C} q_j + K$. 

By construction, this mechanism is strategy-proof, as each student faces a serial decision over a fixed set of available options. Moreover, RSD is \emph{non-wasteful}: if a student can be assigned to a more preferred school without violating the global constraint $K$, the algorithm will prioritize that assignment. However, RSD does not account for school priorities, which typically leads to significant justified envy as earlier students in the ordering can displace those with higher priority. Consequently, while efficient, the mechanism is generally \emph{unfair}.

\paragraph{Deferred Acceptance (DA).}
As a complementary baseline, we employ the \emph{Student-Proposing Deferred Acceptance} algorithm. To incorporate flexibility while preserving strategy-proofness, we must distribute the global slack $K$ across schools \emph{independently} of the students' reported preferences. Specifically, we augment each school $c$'s nominal capacity $q_c$ by a proportional share of the global slack, defined as $q'_c = q_c + \lfloor K / |C| \rfloor$, and distribute the remainder $K \pmod{|C|}$ randomly among the schools. 

The standard DA algorithm is then executed using these adjusted capacities $q'$. This approach ensures that the mechanism remains strategy-proof and eliminates justified envy with respect to the augmented capacities. However, this ex ante allocation of slack is inherently \emph{wasteful}: since the flexibility is distributed uniformly rather than being directed by realized demand, some schools may end up with unused slack while students desiring those seats are rejected elsewhere. This mismatch highlights the inefficiency of rigid slack allocation compared to the adaptive, differentiable approach used in \texttt{MenuNet}.

\section{Implementation Details}
The definitions of envy and waste intensity involve discontinuous indicator functions and are therefore not directly amenable to gradient-based optimization. To address this, we replace them with differentiable surrogates during training.

Specifically, the priority condition $\mathbf{1}[s \succ_c s']$ is approximated by a sigmoid function $\sigma\!\left(\tau_{\mathrm{prio}} \cdot (r_{s,c} - r_{s',c})\right)$, where $r_{s,c}$ denotes the priority score of student $s$ at school $c$ and $\tau_{\mathrm{prio}} > 0$ controls the sharpness of the approximation. Furthermore, we define the \emph{utility gap} as $\Delta_{s,c}(P) = \max\{0, v_{s,c} - \bar{u}_s(P)\}$, representing student $s$'s desire for school $c$ above their current expected utility $\bar{u}_s(P)$.

For waste calculation, we model the availability of seats at school $c$ via a continuous gate:
\begin{equation}
    G_c(P) = \min\left\{1,\, [q_c - \ell_c(P)]_+ + [K - \Omega(P)]_+ \right\},
\end{equation}
where $[x]_+ = \max(0, x)$, $\ell_c(P)$ is the current load, and $\Omega(P)$ is the total system-wide overflow. This gate provides a continuous interpolation between available and saturated regimes. The surrogate waste for student $s$ at school $c$ is then computed as the product of the utility gap and the availability gate: $w_{s,c}(P) = \Delta_{s,c}(P) \cdot G_c(P)$.

These approximations define piecewise differentiable surrogates that preserve the economic intuition of the original definitions while enabling efficient optimization via backpropagation.

\section{Ex post Feasibility and Decomposition Challenges}
Unlike standard capacity constraints that operate independently for each institution, the \emph{global slack} constraint introduces complex interdependencies across the entire market. According to the framework established by \citet{BCK+13a}, such a structure where a global constraint overlaps with local capacity limits, is generally not \emph{universally implementable}. Specifically, a probabilistic assignment matrix $P$ that satisfies the global slack $K$ in expectation may not admit a Birkhoff--von Neumann decomposition into a convex combination of pure matchings that each strictly respect the aggregate bound.

This fundamental theoretical gap presents a significant challenge for traditional combinatorial mechanisms, which typically require a structured hierarchy of constraints to ensure feasible implementation. Our learning-based approach, \texttt{MenuNet}, addresses this limitation by explicitly internalizing these non-linear dependencies through a differentiable $K$-barrier penalty during training. By penalizing the expected violation, \texttt{MenuNet} identifies assignment distributions that are biased toward "near-feasible" regions of the polytope. While the theoretical non-implementability remains, our experiments (see Figure~\ref{fig:overflow}) demonstrate that this approach yields ex post realizations where deviations from the global budget are negligible, effectively navigating a constraint space where exact deterministic implementation is not mathematically guaranteed.

\end{document}